\renewcommand{\)}{\right)}
\renewcommand{\(}{\left(}
\newtheorem{algorithm}{Algorithm}{\bf}{\it}
\newtheorem{theorem}{Theorem}
\newtheorem{lemma}{Lemma}
\theoremstyle{definition}
\newtheorem{definition}{Definition}
\theoremstyle{remark}
\newtheorem{example}{Example}
\title{Secretary problem: graphs, matroids and greedoids}
\author{Wojciech Kordecki \\
The Witelon State University of Applied Sciences in Legnica\\ 
e-mail: wojciech.kordecki@pwsz-legnica.eu
} 
\date{\today
%\\ \jobname
} 
\begin{document}

\maketitle

\begin{abstract}
In the paper the generalisation of the well known ``secretary problem'' is considered. 
The aim of the paper is to give a generalised model in such a way that the chosen set of the possible best $k$ elements have to be independent of all rejected elements. This condition is formulated using the theory of greedoids and in their special cases -- matroids and antimatroids. Examples of some special cases of greedoids (uniform, graphical matroids and binary trees) are considered.

%\keywords{Secretary problem \and Matroid \and Greedoid}
\end{abstract}

\section{Introduction}
\label{s:intro}

The \textit{secretary problem} also known as the \textit{marriage problem} relies on a choice of the best candidate in such a way that only the relation to the previously interviewed candidates is known and the rejected candidates are definitively lost. The number of candidates is also known before the interview starts. Then after the interview we have to decide whether to accept the candidate or not? 
Our goal is to choose the best candidate, i.e. we have to decide when the process of recruitment should be stopped.
In a more general situation we want to choose not only one, the best candidate, but we want to choose the best $k$ members who form a team. 

In the simplest case we do not have any limitation given to recruitment process or the relationships inside the team. 
In this paper we focus our attention on the limitation of a recruitment process. 
namely we can choose only such candidates who are not dependent on the candidates rejected in the current interview.

Such an idea of a recruitment process requires a precise explanation of the meaning of the sentence ``independent of previously rejected candidates''. 
As the next step we have to determine the stopping rule to obtain the optimal stopping time. 
The main aim of this paper is to formulate a sufficiently general but practicably useful structure of dependence. 

Finding the optimal solution of the problem described above in the general case seems impossible in general cases. 
Therefore we study optimal algorithms for finding the best solution only in the some particular, but apparently useful cases.

The paper is organised as follows. In Section~\ref{s:s-problem} the classical secretary problem is introduced. 
Next, the  variant of this problem with the the necessary independence between rejected candidates and accepted ones is presented. 
Section~\ref{s:m-g} introduces the most known independence structures: matroids and their generalisation -- greedoids. 
At the end of that section, the problem in the general greedoid case, is introduced. 
In Section~\ref{s:special} some particular, selected models are introduced. 
In the simpler models, the solutions are given. 
In the more complicated models only some connections between known results (for example from random graph theory) and problems of optimal stopping in such models are discussed. 

\section{Secretary problem}
\label{s:s-problem}

\subsection{Classical secretary problem}
\label{ss:s-problem}

In the classical secretary problem there are $n$ linearly ordered elements $\{1,2,\dots,n\}$.
They are being observed at a random order $\(e_1, e_2,\dots,e_n\)$. 
At the moment $t=i$ the observer knows only the relative ranks of the elements $e_t$ examined so far.
Once rejected, an element cannot be recalled.

The aim of the observer is to choose the currently examined object in such a way that the probability $\Pr\(e_t = n\)$ will be maximal.

This problem is well known and solved. 
Dynkin in \citeyear{Dynkin:opt} shows that for large $n$, it is approximately optimal to wait until a fraction $1/e$ of the elements appears and then to select the next relatively best one. 
The probability of success is also $1/e$.
More strictly, we can present this result as follows.
Let $w\(e\)$ denote the rank of $e$ and $w\(A\)=\max_{e\in A}w\(e\)$.
\begin{theorem}\label{thm:Dynkin}
Let us assume that an algorithm of choices has the following form.
\begin{enumerate}
\item
Reject all elements $e_t$ for subsequent $t\leq v$  for some $v$.
\item 
If $t>v$ then we accept $e_t$ if $w\(e_t\)\geq w\({A_{t-1}}\)$ or reject it in the opposite case. The rejection is irrevocable.
\item 
The process is stopped if the element is accepted or $t=n$.
\end{enumerate}
If $v\sim n/e$ with $n\to\infty$ then the $\Pr\(e_t = n\)$ is maximal and is equal to $1/e$.
\end{theorem}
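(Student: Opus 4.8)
The plan is to compute the success probability of the described threshold rule as an explicit function of the cutoff $v$, and then to optimise it in the limit $n\to\infty$.

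First I would condition on the position of the globally best element, the one of rank $n$. Denote by $P(v)$ the probability that the algorithm selects it. Under the rule we stop at the first relative record occurring strictly after time $v$; hence the best element is selected exactly when it occupies some position $t>v$ and, simultaneously, none of the elements in positions $v+1,\dots,t-1$ is a relative record. This latter requirement is equivalent to demanding that the best among the first $t-1$ elements already appears within the first $v$ positions. Since all orderings are equally likely, the best element sits at position $t$ with probability $1/n$, and, conditionally on $t$, exchangeability gives probability $v/(t-1)$ that the maximum of the first $t-1$ elements lies among the first $v$. Summing over $t$ yields
\[
P(v)=\sum_{t=v+1}^{n}\frac{1}{n}\cdot\frac{v}{t-1}=\frac{v}{n}\sum_{j=v}^{n-1}\frac{1}{j}.
\]

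Next I would pass to the asymptotics. Writing $x=v/n$ and using the harmonic estimate $\sum_{j=v}^{n-1}1/j=\ln(n/v)+O(1/v)$, the success probability converges, as $n\to\infty$ with $v/n\to x$, to $f(x)=-x\ln x$. Maximising $f$ on the interval $(0,1)$ through $f'(x)=-\ln x-1=0$ gives the unique critical point $x=1/e$, at which $f(1/e)=1/e$. Consequently the optimal cutoff obeys $v\sim n/e$ and the limiting probability of success equals $1/e$.

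The one point demanding care is the passage from optimising the discrete quantity $P(v)$ to optimising its continuous limit $f(x)$, that is, confirming that the integer maximiser of the sum really does satisfy $v/n\to 1/e$. I would handle this by examining the increments: a short computation shows that $P(v)\ge P(v-1)$ holds precisely while $\sum_{j=v}^{n-1}1/j\ge 1$, so the optimal $v$ is characterised by the partial harmonic tail crossing the value $1$, which by the logarithmic approximation places it at $v\approx n/e$. Controlling the $O(1/v)$ error uniformly then completes the argument, and the remaining manipulations are routine harmonic-sum estimates.
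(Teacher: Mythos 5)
Your argument is correct and follows essentially the same route as the paper's proof: condition on the position of the best element, use exchangeability to get the success probability $\frac{v}{n}\sum_{j\ge v}\frac{1}{j}$, and maximise the limit $-x\ln x$ at $x=1/e$. Your closing remark on locating the discrete maximiser via the increments $P(v)-P(v-1)$ (equivalently, where the harmonic tail crosses $1$) is slightly more careful than the paper, which settles for the integral approximation $\frac{v}{n}\ln\frac{n}{v}$, but it is the same proof in substance.
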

The easy proof of Theorem~\ref{thm:Dynkin} is a good pattern for considerations which will be used in more general models given in the next parts of this article. 
Therefore, this proof is presented in a more detailed way than it is required in this particular case.
\begin{proof} (see \cite{Ferguson:Who})
Assume that the first $v-1$ elements are rejected and element $m$ has the highest rank among these $v-1$ elements. 
Next, select the first subsequent element that is better than element $m$.
For an arbitrary $v$, the probability that the element with the highest rank is selected is
\[
\begin{split}
P\(r\)
=&\sum_{i=1}^n\Pr\(\text{element $i$ is selected}\cap\text{element $i$ has the highest rank}\) \\
=&\sum_{i=1}^n\Pr\(\text{element $i$ is selected}|\text{element $i$ has the highest rank}\) \\
&\times\Pr\(\text{element $i$ has the highest rank}\) \\
=&\frac{1}{n}\sum_{i=v}^n
\Pr\(\left.\text{\begin{minipage}{4.2cm}
the element with the highest rank of the first $i-1$ elements is in the first $v-1$ elements
\end{minipage}
}\right|
\text{\begin{minipage}{3.5cm}
element $i$ has the highest rank
\end{minipage}
}
\) \\
=&\frac{v-1}{n}\sum_{i-1}^n\frac{1}{i-1}
\end{split}
\]
Therefore the best choice is with probability:
\[
\frac{1}{n}\sum_{i=v}^n\frac{v}{i}
\approx\int\limits_v^n\frac{dx}{x}
=\frac{v}{n}\ln\frac{n}{v}.
\]
The the maximum is achieved for $v\approx n/e$. \qed
\end{proof}
See \cite{Ferguson:Who} for a brief historical review of this classical secretary problem.

An important generalisation of this problem is known as the multiple choice secretary problem (see ~\cite{Hajiaghayi_et_al:Adptive}, \cite{Kle05},  \cite{GirdharDudek:How}). 
The objective of this problem is to select a group of at most $k$ secretaries from a pool of $n$ applicants having a combined value as large as possible.

\subsection{Secretary problem and independence}
\label{ss:ms-problem}

Our generalisation leaves a linear order but assumes an additional combinatorial structure in the set of elements $e_i$.
Using the language of the optimal choice of the candidate to a position (secretary problem), our problem can be described as follows. 

\emph{The subsequent candidates arrive. 
We can reject the candidate and then we consider a new candidate. 
The rejected candidate is irretrievably lost. 
Every new candidate is compared to the previously rejected candidates. 
If a new candidate is dependent on the previously rejected ones, such a candidate is also rejected. 
If the candidate is not dependent, then as a result of the comparison we can reject or accept him/her.
}

The main aim of the article is the research of stopping criteria if the random variables are indexed by elements of a finite structure and the permissible choice is limited by such a structure. 
Assume tentatively that an element $e$ is independent on the set $A$ if it does not belong to the closure of $A$. The name `closure' needs defining which will be done in the next sections. 
Our basic assumptions are:
\begin{itemize}
\item 
in the structure, a closure operator and a family of closed sets are specified,
\item 
if a new element belongs to the closure of previously rejected elements, then it also has to be  rejected,
\item 
if it does not belong to the closure, the new element can be accepted.
\end{itemize}

Let us consider a simple, but illustrative example. 
The structure in this example is known as ``linear structure'' which is a special case of ``strictly hierarchical structure'' (see~\cite{Klimesch:Structure}, p.~46).
At first we have to formulate the following simple combinatorial result.
\begin{lemma}
Denote $\left[n\right]=\left\{1,2,\dots,n\right\}$ and let $j\in\left[n\right]$ be fixed. 
The number $P\(j,n\)$ of permutation $\pi:\left[n\right]\to\left[n\right]$ such that 
\begin{equation}\label{eq:cond_j}
k>j\implies\pi\(k\)>\pi\(j\)
\end{equation}
is equal 
\begin{equation}\label{eq:permut_j}
P\(j,n\)=\frac{n!}{n-j+1}
\end{equation}
\end{lemma}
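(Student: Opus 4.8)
The plan is to first recast condition~\eqref{eq:cond_j} into a statement about a single minimum. The requirement $k>j\implies\pi\(k\)>\pi\(j\)$ says exactly that, among the $n-j+1$ values that $\pi$ places on the positions $j,j+1,\dots,n$, the one sitting at position $j$ is the smallest; equivalently $\pi\(j\)=\min\{\pi\(j\),\pi\(j+1\),\dots,\pi\(n\)\}$. Note that the values at positions $1,\dots,j-1$ are left entirely unconstrained by~\eqref{eq:cond_j}.

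With this reformulation I would count by a symmetry (grouping) argument rather than by an explicit formula. I partition the $n!$ permutations into classes, declaring two permutations equivalent when they agree on the positions $1,\dots,j-1$; equivalently, when they assign the same unordered set of values to the block $B=\{j,j+1,\dots,n\}$. Inside one class the only remaining freedom is the order in which its fixed set of $n-j+1$ values is arranged on the $n-j+1$ positions of $B$, so each class has exactly $\(n-j+1\)!$ members. Condition~\eqref{eq:cond_j} selects precisely the arrangements that send the smallest of those values to position $j$, while the other $n-j$ values may be ordered arbitrarily on the rest of $B$; hence each class contributes exactly $\(n-j\)!$ admissible permutations.

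Since the good fraction $\(n-j\)!/\(n-j+1\)!=1/\(n-j+1\)$ is the same in every class, summing over all classes (which exhaust the $n!$ permutations) immediately yields $P\(j,n\)=n!/\(n-j+1\)$, which is~\eqref{eq:permut_j}.

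For contrast, a direct route would condition on the value $m=\pi\(j\)$: then the $n-j$ later positions must be filled by $n-j$ of the $n-m$ values exceeding $m$ and the first $j-1$ positions by the rest, giving $P\(j,n\)=\(j-1\)!\sum_{m=1}^{j}\(n-m\)!/\(j-m\)!$. This is correct, but the hard part is then to evaluate the sum in closed form, which needs a nonobvious factorial identity. The symmetry argument is preferable precisely because it avoids that computation, collapsing the whole proof to the elementary fact that the minimum of a fixed set of values is equally likely to occupy any of the positions available to it.
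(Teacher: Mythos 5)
Your proof is correct, and it takes a genuinely different---and cleaner---route than the paper's. The paper follows exactly what you call the ``direct route'': it fixes the value $i=\pi(j)$, counts the permutations with that value at position $j$ satisfying \eqref{eq:cond_j} as $\binom{n-i}{n-j}(n-j)!\,(j-1)!=(n-i)!\,(j-1)!/(j-i)!$, and then sums over $i=1,\dots,j$, asserting (without proof) the factorial identity $\sum_{i=1}^{j}(n-i)!\,(j-1)!/(j-i)!=n!/(n-j+1)$. Your symmetry argument bypasses that identity entirely: once the condition is recognized as ``$\pi(j)$ is the minimum of the values occupying positions $j,\dots,n$,'' the uniform fraction $1/(n-j+1)$ within each class gives the count at once, and it also makes the probabilistic meaning of the formula (used later in the paper via \eqref{eq:prob_j}) transparent. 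What the paper's computation buys in exchange is the refined count according to the value of $\pi(j)$, which your grouping discards. One small wording slip worth fixing: agreeing pointwise on positions $1,\dots,j-1$ is a strictly finer equivalence than assigning the same unordered value set to the block $B$ (for $j>1$), so ``equivalently'' is not literally right; since the good fraction is $1/(n-j+1)$ in every class of either partition, the argument is unaffected, but you should pick one of the two partitions and stick with it.
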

\begin{proof}
For $i=\pi\(j\)$ the number of permutations fulfilling \eqref{eq:cond_j} is equal
\[
\begin{split}
\binom{n-i}{n-j}\(n-j\)!\(j-1\)!
&=\frac{\(n-i\)!}{\(n-j\)!\(j-i\)!}\(n-j\)!\(j-1\)! \\[0.3ex]
&=\frac{\(n-i\)!\(j-1\)!}{\(j-i\)!}\,.
\end{split}
\]
For all $i\leq j$ we obtain
\[
\sum_{i=1}^j\frac{\(n-i\)!\(j-1\)!}{\(j-i\)!}=\frac{n!}{n-j+1}\,.
\]
which completes the proof.\qed
\end{proof}

\begin{example}\label{xmpl:simplest}
Let $S=\left[n\right]=\{1,2,\dots,n\}$.
We will make the following assumptions: every secretary $s\in S$ has two features -- qualification (weight) $w\(s\)$ and position in the hierarchical  organisation (i.e.~rank) $r\(s\)$. Let all weights $w\(s\)$ and ranks $r\(s\)$ will be different. 
Let 
\[
r\(S\)=\max\{r\(s\):s\in S\}
\]
and 
\[
w\(S\)=\max\{w\(s\):s\in S\}.
\]
If $S$ is the set of candidates rejected so far, then if $r\(t\)<r\(S\)$ for a new $t\notin S$, $t$ has to be rejected even if $w\(t\)>w\(S\)$. 
In other words, having rejected the boss we must not employ the subordinate. More formally, the element $t$ is independent of the $S$ if $r\(t\)>r\(S\)$.
Note however, that at the moment $t$ we do not know the values $r\(t\)$ and $w\(t\)$ but we can only verify if the inequalities $r\(t\)>r\(S\)$ and $w\(t\)>w\(S\)$ are fulfilled.

In this example we consider two completely different cases.
The first ideal case: 
\[
r\(s_1\)>r\(s_2\)\iff w\(s_1\)>w\(s_2\). 
\]
Then we can assume that $r\(i\)=w\(i\)=i$.
This case coincides with the classical secretary problem.

The second is the most haphazard case: weight and rank are independent random variables%
\footnote{\label{footnote:sim}Any similarity to actual events is purely coincidental.}. 
Then we can assume that $r\(i\)=i$ but $w\(i\)=\pi\(i\)$ where $\pi$ is random permutation of $\left[n\right]$.

In this case let us try to pick the best candidates in the same way as in the classic problem. 
First we examine and reject a fraction $\alpha$ of $n$ candidates (say $S_{\alpha}$) and at the next steps $k>n/\alpha$ we pick the first candidate $s_k$ with the rank and weight higher of the candidates rejected so far, i.e. $r\(s_k\)>r\(S_{k-1}\)$ and $w\(s_k\)>w\(S_{k-1}\)$. 

Let us denote the most valuable candidate by $z_1$ and the second most valuable by $z_2$. 
If $z_2\in S_{\alpha}$, $z_1\notin S_{\alpha}$ and moreover $r\(z_1\)>r\(S_{\alpha}\)$, the selected candidate is the best.
Therefore the probability that the randomly chosen permutation fulfils \eqref{eq:cond_j} for given $n$ and $j>k_0$ for some fixed $k_0$ is equal
\begin{equation}\label{eq:prob_j}
\frac{1}{n}\sum_{j=r+1}^n\frac{1}{n-j+1}=\frac{1}{n}\sum_{j=1}^{n-r}\frac{1}{j}
\sim \frac{\ln\(n-k_0\)+\gamma}{n},
\end{equation}
where $\gamma=0.5772156649\dots$ is an Euler constant.

Continuing this example for the second, haphazard case, let us assume that $\alpha=1/2$. 
In such the case, let $z_1\in S_{1/2}$ and $z_2\notin S_{1/2}$. 
The candidate $z_1$ is elective if $r\(j\)<r\(z_1\)$ for all $j<z_1$. 
From
\begin{equation}\label{eq:alpha2-1}
\Pr\(z_1\in S_2\text{ and }z_2\notin S_2\)=\frac{1}{4}
\end{equation}
and from Equation~\eqref{eq:prob_j} we obtain 
\begin{equation}\label{eq:alpha2-2}
\Pr\(r\(j\)<r\(z_1\)\)=\frac{1}{4n}\sum_{j=1}^{n/2}\frac{1}{j}>\frac{\ln\(n/2\)}{4n}=R_{1/2}\,.
\end{equation}

It seems that a better way is to take as $\alpha$ the value other than $1/2$. 
Then  we have 
\begin{equation}\label{eq:alpha-1}
\Pr\(z_1\in S_2\text{ and }z_2\notin S_2\)=\alpha\(1-\alpha\)
\end{equation}
and instead of \eqref{eq:alpha2-2} we obtain
\begin{equation}\label{eq:alpha-2}
\Pr\(r\(j\)<r\(z_1\)\)=\alpha\(1-\alpha\)\frac{1}{n}\sum_{j=1}^{\alpha n}\frac{1}{j}
>\alpha\(1-\alpha\)\ln\(\alpha n\)/n=R_{\alpha}\,.
\end{equation}
Nevertheless the difference between the right side of \eqref{eq:alpha2-2} and the maximal value of the right side of \eqref{eq:alpha-2} is very small, less than $1\%$ -- see Table~\ref{tab:rhs-alpha}.

\begin{table}[!hbt]
\centering 
\caption{\label{tab:rhs-alpha}Right-hand side of \eqref{eq:alpha2-2} and maximum of right-hand side of \eqref{eq:alpha-2}}

\smallskip
\begin{tabular}{r|c|c|c|c}
$n$ & $\alpha$ & $R_{1/2}$ & $\max R_{\alpha}$ &
$\frac{\max R_{\alpha}-R_2}{\max R_{\alpha}}$ \\\hline 
10 &  0.6084 &    0.04024 & 0.04024 & 0.0647 \\
20 &  0.5844 &    0.02878 & 0.02986 & 0.0356  \\
50 &  0.5653 &   0.01609 & 0.01642 & 0.0200  \\
100 & 0.5555 &  0.00978 & 0.00992 & 0.0141 
\end{tabular}
\end{table}
Note that in the ludicrous situation\footnote{See footnote~\ref{footnote:sim}}
\[
r\(e_1\)<r\(e_2\)\iff w\(e_1\)>w\(e_2\)
\]
for any pair $e_1,e_2$, the optimal strategy is to choose the first candidate. 
Every next candidate will be either worse or dependent.
This situation leads of course, with high probability $\(n-1\)/n$, to the lack of choice, so it can be neglected.

As the third case in this example we can consider such a situation that the correlation between ranks and weights is positive (usually essentially greater than zero), but smaller than one. 
Such a case needs more precise assumptions and probabilistic considerations hence it will be omitted in this paper.
\end{example}

\section{Matroids and greedoids}
\label{s:m-g}

As it was mentioned previously we need a precise definition of the words `closure' of $A$ and `independent' element $e$ from the set $A$.  
The useful tool to give such the definitions are structures known as matroids and more generally -- greedoids. 
In the next two sections we provide the necessary definitions and results from the matroid and greedoid theory.

\subsection{Matroids}
\label{ss:ma}

Let $E$ be a finite set. A family $\mathcal{I}$ of subsets of $E$ is the family of independent sets if the following conditions hold:
\begin{description}
\item[$(i_1)$]
$\emptyset\in\mathcal{I}$,
\item[$(i_2)$]
if $I_1\subseteq I_2\in\mathcal{I}$, then $I_1\in\mathcal{I}$,
\item[$(i_3)$]
if $I_1,i_2\in\mathcal{I}$, $|I_1|<|I_2|$, then there exists $e\in I_2\setminus I_1$,
such that $I_1\cup\{e\}\in\mathcal{I}$.
\end{description}
A pair $\(E,\mathcal{I}\)$ is a matroid (see for example \cite{Oxley:ma2nd}, \cite{Welsh:ma}, \cite{Wilson-eng5}). 

A basis is every maximal independent set. All bases have the same number of elements. A rank $\rho\(A\)$ of any set $A\subseteq E$ is the number of elements of maximal independent set $I\subseteq A$. A closure $\sigma\(A\)$ of a set $A$ is the maximal set with the same rank as $A$. The set $A$ is closed if $\sigma\(A\)=A$.
The operator $\sigma$ for matroids fulfils the following properties:
\begin{description}
\item[$(s_1)$]
$A\subseteq\sigma\(A\)$,
\item[$(s_2)$]
if $A\subseteq B$ then $\sigma\(A\)\subseteq\sigma\(B\)$,
\item[$(s_3)$]
$\sigma\(\sigma\(A\)\)=\sigma\(A\)$,
\end{description}

Using the definition of matroid, we can interpret ``an independence'' of element $e$ of the set $A$ in such a way that $e\notin\sigma\(A\)$. 
Comparing this interpretation with the example in Section~\ref{ss:ms-problem}, we can remark that such a meaning of independence is not fortunate because the closure $\sigma\(A\)$ has the \textit{exchange property}: 
\begin{description}
\item[$(ex)$]
if $f\notin\sigma\(A\)$, $f\in\sigma\(A\cup\left\{e\right\}\)$ then $e\in\sigma\(A\cup\left\{f\right\}\)$.
\end{description}
A structure $\(E,\mathcal{I}\)$ is a matroid if and only if $\sigma$ fulfils the conditions $(s_1)$ and $(s_2)$ and the condition $(ex)$. Note that $(s_3)$ follows from $(s_1)$ and $(s_2)$ and the condition $(ex)$ but $(s_1)$ -- $(s_3)$ does not give $(ex)$. Therefore the set of conditions  $(s_1)$ -- $(s_3)$ is not a characterisation of a matroid.

\subsection{Greedoids}
\label{ss:gree}

\subsubsection{Basic definitions and properties}
\label{ss:basic}

The hierarchical structure of dependence in Example~\ref{xmpl:simplest} does not fulfil the condition $(ex)$. Therefore we have to use more a general structure than matroids.

A greedoid (a greedy structure) is the family $\mathcal{F}$ of subsets of the set $E$ which fullfils the following conditions (see for example \cite{Korte_et_al:Greedoids}, \cite{KorteVygen:CombOpti5}):
\begin{description}
\item[$(f_1)$]
$\emptyset\in\mathcal{F}$,
\item[$(f_2)$]
if $F_1,F_2\in\mathcal{F}$, $|F_1|<|F_2|$, then there exists $e\in F_2\setminus F_1$, such that $F_1\cup\{e\}\in\mathcal{F}$.
\end{description}
Note that the conditions for greedoids are the conditions for matroids with the exception of $(i_2)$. The family $\mathcal{F}$ is called \textit{feasible}.
The family $\mathcal{F}$ is called \textit{accessible} if the following condition holds:
\begin{description}
\item[$(a_1)$] 
if $F\in\mathcal{F}\setminus\{\emptyset\}$ then there exist $s\in F$ such that $F\setminus\{e\}\in\mathcal{F}$. 
\end{description}
The pair $\(E,\mathcal{F}\)$ where $\mathcal{F}$ is accessible is called an accessible system. Every greedoid is an accessible system. Matroids are also greedoids with independent sets as feasible sets.
Clearly, the property $(a_1)$ is weaker than the property $(i_2)$ -- does not every subset of an independent set is independent, but at least one subset of a feasible set is also feasible.

A basis is every maximal feasible set. 
All bases have the same number of elements. A rank $\rho\(A\)$ of any set $A\subseteq E$ is the number of elements of maximal feasible set $F\subseteq A$. 
A closure $\tau\(A\)$ of a set $A$ is the maximal set with the same rank as $A$, i.e. (see~\cite{KorteLovas:structural} or~\cite{Korte_et_al:Greedoids})
\begin{equation}\label{eq:greedy-closure}
\tau\(A\)=\left\{x\in E:\rho\(a\cup\left\{x\right\}\)=\rho\(A\)\right\}.
\end{equation}
The closure $\tau\(A\)$ defined by~\eqref{eq:greedy-closure} fulfils the conditions $(s_1)$ and $(s_3)$ but not necessarily the condition $(s_2)$, i.e. closure operator is not necessarily monotone (see~\cite{Korte_et_al:Greedoids}, Example on p.~~69, fig.~6). However one can define the monotone closure operator $\sigma\(A\)$:
\begin{equation}\label{eq:greedy-monotone-closure}
\sigma\(A\)=\bigcap\left\{X:A\subseteq X, \tau\(X\)=X\right\}.
\end{equation}
It is easy to see that the monotone closure $\sigma\(A\)$ satisfies all conditions $(s_1)$ -- $(s_3)$, but greedoids is not uniquely determined by its monotone closure operator (see~\cite{Korte_et_al:Greedoids}, p.~63).

If a greedoid fulfils the \textit{antiexchange} property
\begin{description}
\item[$(aex)$]
if $f\notin\sigma\(A\)$, $f\in\sigma\(A\cup\left\{e\right\}\)$, $f\neq e$ then $e\notin\sigma\(A\cup\left\{f\right\}\)$
\end{description}
then we call such a greedoid an antimatroid. 
\begin{theorem}[\cite{KorteVygen:CombOpti5}, Th.~14.4]
\label{thm:KV1}
If $\(E,\mathcal{F}\)$ is an antimatroid then 
\begin{equation}\label{eq:convex}
\sigma\(A\)=\bigcap\left\{X\subseteq V:A\subseteq X,V\setminus X\in\mathcal{F}\right\} 
\end{equation}
is a closure operator, i.e. it satisfies conditions $(s_1)$-- $(s_3)$.
\end{theorem}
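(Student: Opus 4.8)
Throughout I write $E$ for the ground set (so that $V=E$ in~\eqref{eq:convex}), and I call a set $X\subseteq E$ \emph{convex} when its complement is feasible, i.e.\ $E\setminus X\in\mathcal{F}$. With this terminology~\eqref{eq:convex} reads $\sigma(A)=\bigcap\{X:A\subseteq X,\ X\text{ convex}\}$, so that $\sigma(A)$ is the intersection of all convex sets containing $A$. The plan is to reduce the three axioms to a single structural fact: the family $\mathcal{C}$ of convex sets is a \emph{Moore family} (a closure system), meaning that $E\in\mathcal{C}$ and $\mathcal{C}$ is closed under intersection. Once this is known, $(s_1)$--$(s_3)$ are precisely the standard properties of the closure operator attached to a Moore family, so the entire content of the proof is concentrated in showing that $\mathcal{C}$ is intersection-closed.

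The two order-theoretic axioms require no special structure and I would dispose of them first. Note that $\mathcal{C}\neq\emptyset$: by $(f_1)$ we have $\emptyset\in\mathcal{F}$, hence $E=E\setminus\emptyset$ is convex and contains $A$, so $E$ always occurs in the intersection and $\sigma(A)\subseteq E$ is well defined. Axiom $(s_1)$ is immediate, since every set $X$ appearing in~\eqref{eq:convex} contains $A$ and therefore so does their intersection. For $(s_2)$, if $A\subseteq B$ then every convex set containing $B$ also contains $A$; hence the family intersected to form $\sigma(B)$ is a subfamily of the one intersected to form $\sigma(A)$, and intersecting over a smaller family can only enlarge the result, giving $\sigma(A)\subseteq\sigma(B)$.

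The heart of the proof is $(s_3)$, and this is where the antimatroid hypothesis enters. Applying $(s_1)$ to $\sigma(A)$ gives $\sigma(A)\subseteq\sigma(\sigma(A))$, so it suffices to prove the reverse inclusion, and for that it is enough to show that $\sigma(A)$ is itself convex: then $\sigma(A)$ is one of the sets $X$ over which $\sigma(\sigma(A))$ is intersected, whence $\sigma(\sigma(A))\subseteq\sigma(A)$. Now $\sigma(A)$ is an intersection of convex sets, so by De Morgan $E\setminus\sigma(A)=\bigcup\{E\setminus X:X\text{ convex},\ A\subseteq X\}$ is a union of members of $\mathcal{F}$. Thus $\sigma(A)$ is convex exactly when $\mathcal{F}$ is closed under union, which is the single fact I must establish and which I regard as the main obstacle.

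To prove that $\mathcal{F}$ is union-closed I would argue as follows. Every greedoid is accessible by $(a_1)$, so each $F_2\in\mathcal{F}$ admits a chain $\emptyset=G_0\subset G_1\subset\cdots\subset G_m=F_2$ of feasible sets with $|G_{i+1}\setminus G_i|=1$. Given a second feasible set $F_1$, I would show by induction on $i$ that $F_1\cup G_i\in\mathcal{F}$, the conclusion $F_1\cup F_2\in\mathcal{F}$ being the case $i=m$. The induction step, adjoining to $F_1\cup G_i$ the single new element $x$ with $G_{i+1}=G_i\cup\{x\}$, is exactly the \emph{interval property without upper bounds}: from $G_i\subseteq F_1\cup G_i$, both feasible, together with $G_i\cup\{x\}\in\mathcal{F}$, one wants $(F_1\cup G_i)\cup\{x\}\in\mathcal{F}$. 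The delicate point, therefore, is the equivalence for greedoids between the antiexchange property $(aex)$ of the closure and this purely set-theoretic interval property of $\mathcal{F}$; translating $(aex)$, which is phrased through $\sigma$, into a statement about $\mathcal{F}$ is where one must exploit that in an antimatroid the closed (convex) sets coincide with the complements of the feasible sets. Granting this equivalence, union-closure follows from the chain argument, $\sigma(A)$ is convex, $(s_3)$ holds, and the three axioms are verified.
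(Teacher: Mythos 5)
You should first be aware that the paper contains no proof of Theorem~\ref{thm:KV1} at all: it is stated as an imported result with the citation to \cite{KorteVygen:CombOpti5}, so there is no in-paper argument to measure yours against. Taken on its own terms, your architecture is the standard one and is sound as far as it goes: since $\emptyset\in\mathcal{F}$, the ground set $E$ is always among the sets being intersected, $(s_1)$ and $(s_2)$ are purely order-theoretic, and by De Morgan $(s_3)$ reduces to showing that $\mathcal{F}$ is closed under (finite) union, which you correctly propose to derive from an accessible chain for $F_2$ combined with the interval property without upper bounds.

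There is, however, a genuine gap precisely at the point you flag and then pass over with ``granting this equivalence''. In this paper an antimatroid is \emph{defined} as a greedoid whose monotone closure operator \eqref{eq:greedy-monotone-closure} (built from the rank closure $\tau$ of \eqref{eq:greedy-closure}) satisfies the antiexchange property $(aex)$; it is \emph{not} defined as an accessible union-closed set system. Consequently the implication ``$(aex)$ for the operator of \eqref{eq:greedy-monotone-closure} implies that $\mathcal{F}$ is union-closed (equivalently, has the interval property without upper bounds)'' is the entire substantive content of the theorem, and your write-up never establishes it. Moreover, the justification you sketch --- that ``one must exploit that in an antimatroid the closed (convex) sets coincide with the complements of the feasible sets'' --- is essentially the conclusion you are trying to reach, namely that the complements of feasible sets form the closure system of the operator \eqref{eq:convex}; invoked at this stage, the argument is circular at its only nontrivial step. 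To close the gap you must prove directly from $(aex)$ that whenever $A\subseteq B$ are feasible, $A\cup\{x\}$ is feasible and $x\notin B$, then $B\cup\{x\}$ is feasible; this is a real (if standard) piece of greedoid theory, recorded in \cite{Korte_et_al:Greedoids}, and without it your chain induction has no engine. Everything surrounding that missing step is routine and correctly handled.
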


The structure of the Example~\ref{xmpl:simplest} is an antimatroid if we take as closed sets all the sets of the form $\left[k\right]$, where $\left[k\right]=\left\{1,\dots,k\right\}$ for $1\leq k\leq n$ and $\left[0\right]=\emptyset$. The feasible sets have the form $\left[n\right]\setminus\left[k\right]$ for $0=\leq k\leq n$.

\begin{lemma}\label{lem:anti}
Let $\(E,\sigma\)$ be an antimatroid. Suppose that the sequence $e_1,\dots,e_n$ is such that
\begin{equation}\label{eq:linear}
\sigma\(\left\{e_1,\dots,e_i\right\}\)\subseteq\sigma\(\left\{e_1,\dots,e_j\right\}\)
\end{equation}
for every pair $i<j$. Then the sequence $e_1\dots,e_n$ is linearly ordered.
\end{lemma}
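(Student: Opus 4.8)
The plan is to track the prefix sets $A_i=\{e_1,\dots,e_i\}$ together with their closures $C_i=\sigma(A_i)$, and to extract a total order from the flag these closures form. By $(s_1)$ we have $e_i\in C_i$ for every $i$, and hypothesis~\eqref{eq:linear} supplies the ascending chain $C_1\subseteq C_2\subseteq\dots\subseteq C_n$. First I would record, using idempotence $(s_3)$ together with the auxiliary identity $\sigma(A\cup\{x\})=\sigma(\sigma(A)\cup\{x\})$ (itself an easy consequence of $(s_1)$--$(s_3)$), the following step dichotomy: passing from $A_{i-1}$ to $A_i=A_{i-1}\cup\{e_i\}$ either leaves the closure unchanged, when $e_i\in C_{i-1}$, or strictly enlarges it, when $e_i\notin C_{i-1}$, in which case $C_i=\sigma(C_{i-1}\cup\{e_i\})\supsetneq C_{i-1}$. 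Repetitions of the first kind carry no order information, so I would discard them and assume henceforth that $e_i\notin C_{i-1}$ at every step, so that $\emptyset=C_0\subsetneq C_1\subsetneq\dots\subsetneq C_n$ is a strict flag of closed sets.

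Next I would define the candidate order intrinsically by $e_i\preceq e_j$ iff $e_i\in\sigma(\{e_j\})$, and verify that on an antimatroid this is a genuine partial order: reflexivity is $(s_1)$, transitivity follows from $(s_2)$ and $(s_3)$, and antisymmetry is exactly an application of $(aex)$ with $A=\emptyset$ (using $\sigma(\emptyset)=\emptyset$), which forbids $e_i\in\sigma(\{e_j\})$ and $e_j\in\sigma(\{e_i\})$ from holding at once when $e_i\neq e_j$. The remaining and decisive point is to show that any two elements of the sequence are $\preceq$-comparable, so that $\preceq$ is in fact linear and compatible with the index order; this is where both \eqref{eq:linear} and the antiexchange property carry the argument.

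The step I expect to be the main obstacle is this comparability claim, and I would obtain it from a rigidity property of the flag proved via $(aex)$. The aim is to show that at each stage the element $e_i$ is the \emph{unique} element of $E\setminus C_{i-1}$ whose adjunction generates $C_i$: if some $f\in C_i\setminus C_{i-1}$ with $f\neq e_i$ also satisfied $\sigma(C_{i-1}\cup\{f\})=C_i$, then from $f\notin\sigma(C_{i-1})=C_{i-1}$ and $f\in C_i=\sigma(C_{i-1}\cup\{e_i\})$ the property $(aex)$ (applied with $A=C_{i-1}$) would force $e_i\notin\sigma(C_{i-1}\cup\{f\})=C_i$, contradicting $e_i\in C_i$. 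This rigidity shows that the strict flag admits no alternative generating sequence, whence the elements $e_1,\dots,e_n$ are pairwise comparable under $\preceq$ and the sequence is linearly ordered. Finally I would note that this is precisely the situation of the linear (strictly hierarchical) structure of Example~\ref{xmpl:simplest}, whose closed sets are the initial segments $[k]$ and which possesses a unique shelling; the lemma thus isolates \eqref{eq:linear} as the condition pinning that structure down.
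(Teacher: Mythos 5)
The paper states Lemma~\ref{lem:anti} without any proof, so there is nothing to compare your argument against; judged on its own, your proposal breaks down at exactly the step you flag as decisive. The preliminary material is fine: the identity $\sigma(A\cup\{x\})=\sigma(\sigma(A)\cup\{x\})$, the dichotomy between closure-preserving and closure-enlarging steps, the verification that $e\preceq f\iff e\in\sigma(\{f\})$ is a partial order, and the $(aex)$-based uniqueness of the element generating $C_i$ over $C_{i-1}$ are all correct. But the final inference --- from that uniqueness (``rigidity'') to pairwise $\preceq$-comparability of the $e_i$ --- is a non sequitur. Take the rooted-tree antimatroid of Section~\ref{ss:trees} on a root $r$ with two children $a,b$: the closed sets are $\emptyset$, $\{a\}$, $\{b\}$, $\{a,b\}$ and $E$, and the sequence $e_1=a$, $e_2=b$ produces the strict flag $\emptyset\subsetneq\{a\}\subsetneq\{a,b\}$ in which each $C_i$ is generated over $C_{i-1}$ by a unique new element; yet $a\notin\sigma(\{b\})=\{b\}$ and $b\notin\sigma(\{a\})=\{a\}$, so $a$ and $b$ are $\preceq$-incomparable. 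A smaller second gap is that you discard the indices with $e_i\in C_{i-1}$; those elements still belong to the sequence whose linearity you are supposed to establish and must be fitted into the order.

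The same example exposes why hypothesis \eqref{eq:linear} cannot ``carry the argument'' as you claim: since the closure operator of an antimatroid is monotone (property $(s_2)$, Theorem~\ref{thm:KV1}), condition \eqref{eq:linear} holds for \emph{every} sequence, i.e.\ it is vacuous, and the two-leaf sequence above satisfies it while violating the conclusion. So under the literal reading the statement itself is false, and no completion of your argument can succeed. The observation that \eqref{eq:linear} is automatic should have been the warning sign: the intended hypothesis must be something stronger --- for instance $\sigma(\{e_i\})\subseteq\sigma(\{e_j\})$ for all $i<j$, under which your order $\preceq$ makes the conclusion essentially immediate, or a condition phrased with the non-monotone closure $\tau$ of \eqref{eq:greedy-closure}. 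Identifying and proving such a corrected formulation, rather than pushing the antiexchange rigidity further, is what this lemma actually requires.
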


In the next parts of this section we give some examples of greedoids. The exhaustive review of examples of greedoids can be found in~\cite{GKL:1989}. In our article we give only some simplified examples, useful for our aim.

\subsubsection{Trees}
\label{ss:trees}

Let $T$ be a tree with the root $r$ and the set of vertices $V$. 
The distance from the root $r$ to other $v$ is a height $h\(v\)$ of $v$ then $h\(r\)=0$.
The height $h=h\(T\)$ of the tree $T$ is the maximum height of the leaf.

Let $\mathcal{F}$ be the family of all vertex sets such that 
$U\in\mathcal{F}$ if  $U$ is a subtree of $T$ and $r\in U$.
Let 
\begin{equation}\label{eq:tree1closure}
\sigma\(A\)=\bigcap_X\left\{A\subseteq X\subseteq E: E\setminus X\in\mathcal{F} \right\}
\end{equation}
Then $\mathcal{T}=\(V,\mathcal{F}\)$ is a greedoid of feasible sets and $\sigma$ defined by~\eqref{eq:tree1closure} is the closure operator, which fulfils the property $(aex)$. 
Therefore $\mathcal{T}$ is an antimatroid. 
Such an antimatroid can be considered as an example of a hierarchical organisation. Note that the hierarchical structure of dependence in Example~\ref{xmpl:simplest} is the trivial example of a tree (with the element $n$ as a root), and it is a very simple example of antimatroid.

\begin{figure}[!htb]
\centering
\includegraphics[width=6cm]{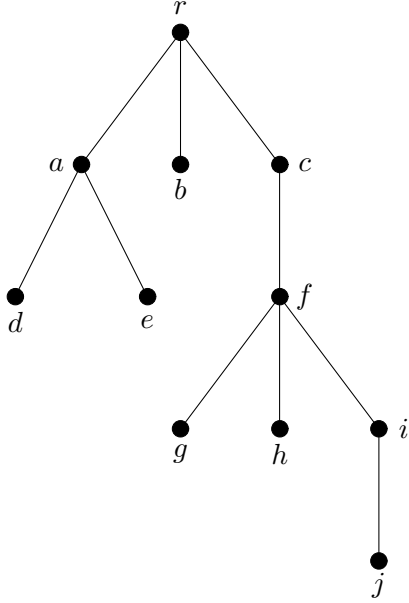}
\caption{\label{fig:tree1}Rooted tree -- greedoid of hierarchical organisation}
\end{figure}
From Theorem~\ref{thm:KV1} and equation \eqref{eq:tree1closure} we have the following result.
\begin{lemma}\label{lem:sigma-tree}
Every closed set $A$ in the given greedoid $\mathcal{T}$ is a sum of $k$ disjoint maximal subtrees $T_i\subseteq T$, $i=1,\dots,k$, with the set of their roots $H=\left\{e_1,\dots,e_k\right\}$ where $e_i\in T_i$ has the highest height in~$T$. 
\end{lemma}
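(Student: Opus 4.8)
The plan is to reduce the statement to a clean description of the closed sets of $\mathcal{T}$ and then read off the tree decomposition directly. First I would establish that a vertex set $A\subseteq V$ is closed (that is, $\sigma(A)=A$ for $\sigma$ given by~\eqref{eq:tree1closure}) if and only if its complement $V\setminus A$ is feasible. Since $\mathcal{T}$ is an antimatroid, Theorem~\ref{thm:KV1} guarantees that~\eqref{eq:tree1closure} agrees with~\eqref{eq:convex} and defines a genuine closure operator, so the closed sets are exactly the fixed points of $\sigma$. The key observation here is that the feasible family $\mathcal{F}$ is closed under unions: if each $U_\lambda$ is a subtree of $T$ containing the root $r$, then $\bigcup_\lambda U_\lambda$ is again connected, because all the $U_\lambda$ share the vertex $r$, and it contains $r$, hence is a subtree through $r$ and so feasible. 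Unwinding the definition gives $V\setminus\sigma(A)=\bigcup\{F\in\mathcal{F}:F\subseteq V\setminus A\}$, which is therefore feasible; together with property $(s_1)$ this yields the equivalence ``$A$ closed $\iff V\setminus A\in\mathcal{F}$''.

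Next I would translate feasibility into the natural partial order on the tree. Write $u\preceq v$ when $u$ lies on the path from $r$ to $v$, i.e. $u$ is an ancestor of $v$; then $r$ is the least element and the feasible sets (rooted subtrees) are precisely the order ideals for $\preceq$, since a connected set containing $r$ is exactly a set that together with any of its vertices contains all of that vertex's ancestors. By the equivalence of the first step, the closed sets are then precisely the order filters: $A$ is closed if and only if $v\in A$ forces every descendant of $v$ to lie in $A$. In short, the closed sets are exactly the descendant-closed vertex sets.

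With this description the decomposition is almost immediate. Let $A$ be closed and let $T_1,\dots,T_k$ be the connected components of the subgraph of $T$ induced on $A$; these are pairwise disjoint subtrees whose union is $A$. Within each $T_i$ the order $\preceq$ has a unique minimal element $e_i$: any connected vertex set in a tree has a unique vertex of smallest height, obtained because the path joining two such vertices stays inside the set and passes through their common ancestor. This apex $e_i$ is the vertex of $T_i$ nearest the root $r$, i.e. the root of the subtree $T_i$. Because $A$ is descendant-closed and $e_i\in A$, every descendant of $e_i$ belongs to $A$ and is connected to $e_i$, hence lies in $T_i$; conversely every vertex of the connected set $T_i$ is a descendant of $e_i$. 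Therefore each $T_i$ is exactly the maximal subtree of $T$ rooted at $e_i$, the roots $H=\{e_1,\dots,e_k\}$ are distinct, and $A=\bigsqcup_{i=1}^{k}T_i$, as claimed.

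The only genuinely substantive point is the first paragraph: pinning down the closed sets as the complements of feasible sets, which rests on the union-closure of rooted subtrees, where the common root $r$ is exactly what makes the union connected. Once the closed sets are identified with the descendant-closed sets, the component decomposition and the identification of each $e_i$ as the root of its subtree are routine consequences of the tree structure, so I would expect no real obstacle there.
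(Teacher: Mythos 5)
Your proof is correct, and it follows exactly the route the paper gestures at: the paper gives no proof of Lemma~\ref{lem:sigma-tree}, stating only that it follows from Theorem~\ref{thm:KV1} and~\eqref{eq:tree1closure}, and your argument (closed sets are precisely the complements of feasible sets because rooted subtrees sharing $r$ are closed under union, hence closed sets are the descendant-closed sets, which decompose into the full subtrees hanging below the minimal-height vertex of each component) supplies all the missing details. The only friction is with the paper's wording, not your argument: since the paper defines $h(v)$ as distance from the root, your $e_i$ is the vertex of $T_i$ of \emph{smallest} height (nearest $r$), which is clearly what the lemma intends by ``highest'' (highest in the hierarchy), as confirmed by the example $H=\left\{a,f\right\}$ following the lemma.
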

The set $H=H\(A\)$ is the unique spanning set of the set $A$, i.e. is the unique $H$ such that $\sigma\(H\)=A=\sigma\(A\)$. 

In Fig.~\ref{fig:tree1}, for example the sets of vertices $\left\{r,a,d,e\right\}$ and $\left\{r,c,f,g,h\right\}$ belong to $\mathcal{F}$ (they are subtrees  rooted in $r$) but the sets $\left\{d,e\right\}$, $\left\{a,d,e\right\}$ and $\left\{c,g,h,j\right\}$ do not belong to $\mathcal{F}$ (they are not subtrees or they are subtrees do not rooted in $r$).
The set 
\[
A=\left\{a,d,e,f,g,h,i,j\right\}
\]
is closed and with the minimal spanning set $H=\left\{a,f\right\}$.

\subsubsection{Acyclic digraphs}
\label{ss:adigraphs}

Let $D$ be a rooted directed acyclic digraph with the root $r$ and the set of arcs~$E$. 
A~rooted subgraph of $D$ is connected (directionally connected) if for its every vertex $v$ there exist a path from $r$ to $v$.
Let $\mathcal{F}$ be the family of all sets of arcs of connected subgraphs rooted at~$r$. 
\begin{figure}[!htb]
\centering
\includegraphics[width=3.5cm]{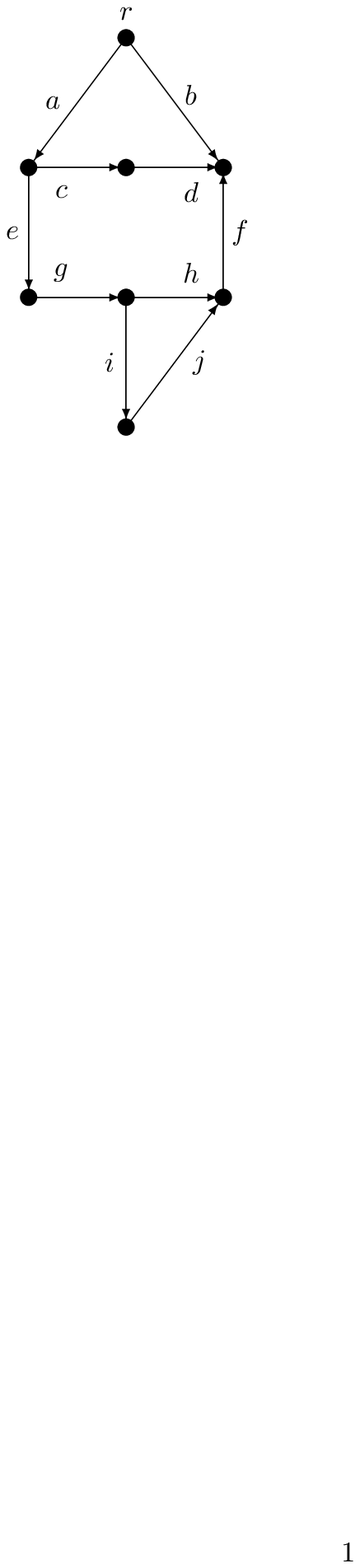}
\caption{\label{fig:digraph1}Rooted acyclic digraph -- greedoid of hierarchical organisation with multiply dependence}
\end{figure}
Let 
\begin{equation}\label{eq:digraph1closure}
\sigma\(A\)=\bigcap_X\left\{A\subseteq X\subseteq E: E\setminus X\in\mathcal{F} \right\}
\end{equation}
Then $\mathcal{D}=\(V,\mathcal{F}\)$ is a greedoid of feasible sets and $\sigma$ defined by~\eqref{eq:digraph1closure} is the closure operator 
(see \cite{Korte_et_al:Greedoids}, p.~26).
Such a greedoid can be considered as an example of a hierarchical organisation with multiple dependencies. 

Note that if every vertex $v\neq r$ has indegree  $d^-\(v\)=1$, the the linegraph of $D$ is a tree. Therefore in such a case, a greedoid $\mathcal{D}$ is isomorphic to a greedoid $\mathcal{T}$ presented in~\ref{ss:trees}.

In Fig.~\ref{fig:digraph1} for example the set of arcs $\left\{a,b,c\right\}$, $\left\{a,b,c,d\right\}$ and $\left\{a,c,e\right\}$ belong to $\mathcal{F}$ (they are connected and rooted in $r$) but $\left\{b,d,f\right\}$ does not belong to $\mathcal{F}$ (it is not rooted in $r$) and $\left\{b,e,j\right\}$ does not belong to $\mathcal{F}$ either (it is not connected, so it is not rooted in $r$).

\subsection{Secretary problem in greedoids}
\label{ss:SPG}

Now we formulate the problem in the most general way, for any greedoid.
Let $\(E,\mathcal{F}\)$, $|E|=n$, be a greedoid with closure operator $\sigma$. On the set $E$ a weight function $w:E\to\mathbb{N}$ is defined. We want to choose the element with the greatest weight under the following conditions.
\begin{enumerate}
\item 
The structure $\(E,\mathcal{F}\)$ and the function $w:E\to\mathbb{R}$, $w\(e\)>0$ for all $e\in E$, is defined but it is not known.
\item 
The elements of $E$ arrive sequentially at the moments $t=1,2,\dots,n$.
\item 
At the moment $t$ we know which element arrives (say the element $e_t$) and we can observe its weight $w\(e_t\)$ and the closure  of $A_{t-1}$ restricted to $A_t$, i.e. $\sigma\(A_{t-1}\)\cap A_t\,$.
\item 
For any two subsets $A',A''\subseteq A$ the possible inclusion $\sigma\(A'\)\subseteq\sigma\(A''\)$ are known.
\item 
Let $A_{t-1}$ be the set of elements which arrived before the moment $t$. If $e_t\in\sigma\(A_{t-1}\)$ then $e_t$ is rejected irrevocably.
\item 
If $e_t\notin\sigma\(A_{t-1}\)$ then we can accept $e_t$ if $w\(e_t\)\geq w\({A_{t-1}}\)$ \\
or reject it in the opposite case. The rejection is irrevocable.
\item 
The process is stopped if the element is accepted or if there  are no next elements to observe.
\end{enumerate}

The proposed algorithm is similar to the algorithm known as Secretary Problem.
\begin{algorithm}\label{alg:algo-greedy}
At each step the observer knows the weight of the chosen element $e$ and performs the actions below:
\begin{enumerate}
\item\label{alg:test-set} 
Fix the closed family of test sets $\mathbf{T}$ or least $\sigma\(\mathbf{T}\)$.
\item
Reject all elements $e_t$ for subsequent $t$ while $A_t\subset \sigma\(\mathbf{T}\)$ ($A_t\neq \sigma\(\mathbf{T}\)$) for some $T$.
\item
For the next $t$ reject it if $w\(e_t\)<w\(A_{t-1}\)$ or $e_t\in\sigma\(A_{t-1}\)$.
\item
If $w\(e_t\)>w\(A_{t-1}\)$ and $e_t\notin\sigma\(A_{t-1}\)$ accept $e_t$ and stop the process.
\end{enumerate}
\end{algorithm}

We take as the criterion the subspaces of the appropriately chosen rank, say rank $k_0$. 
Therefore we unconditionally reject the elements $e_t$ until $\rho\(A_t\)=k_0$. 
For the next $t$ we reject the element $e_t$ if $w\(e_t\)<w\(A_{t-1}\)$ or $e_t\in\sigma\(A_{t-1}\)$.
If $w\(e_t\)>w\({A_{t-1}}\)$ and $e_t\notin\sigma\(A_{t-1}\)$ we accept $e_t$ and stop the process.
To solve this problem we need to determine the distribution of the random variable $\rho\(A_t\)$.  

The presented model in the matroid case is different from the known so far \textit{Matroid Secretary Problem} introduced in \cite{BIKK08}. 
Their model is a generalisation of the multiple choice secretary problem by an additional condition that the chosen set has to be independent. 
In such a model the accepted elements do not have to be independent of the previously rejected elements.
The paper \cite{Soto2013} gives an exhaustive review of known results and presents some new ones.

\section{Special cases}
\label{s:special}

\subsection{Uniform matroid}
\label{ss:uniform}

Uniform matroid $U_{k,n}=\(E,\mathcal{I}\)$, where independent sets are all subsets of $E=\left[n\right]$ with the number of element no greater than $k$:
\begin{align*}
\mathcal{I}=\left\{I:|I|\leq k\right\}, \\
\sigma\(A\)=
\begin{cases}
A & \text{if $|A|<k$,} \\
E & \text{if $|A|\geq k$.}
\end{cases}
\end{align*}
Obviously there must be $k>v$.
Assume  $w\(i\)=i$.
Then the best choice is with probability:
\[
\frac{1}{n}\sum_{i=v}^k\frac{v}{i}
\approx\int\limits_v^k\frac{dx}{x}
=\frac{v}{n}\ln\frac{k}{n},
\]
as in Theorem~\ref{thm:Dynkin}.
The maximum is achieved for $v\approx k/e$ and the optimal probability is $\(k/n\)/e$. If $k=n$ we obtain the classical case with the solution given by Theorem~\ref{thm:Dynkin}.

\subsection{Binary trees}
\label{ss:bitrees}

\begin{definition}
A binary tree with $n$ vertices is an empty tree $T=\emptyset$ if $n=0$ or a triple $T=\(L,r,R\)$ where $r$ is the root of the tree, $L$ (left subtree) is a binary tree with $l$ vertices and $R$ (right subtree) is a binary tree with $p$ vertices, where $n=l+p+1$.
For nonempty $T$, the root of $L$ is called a left child of $r$ and the root of $R$ is called a right child of $r$. If $T=\(\emptyset,v,\emptyset\)$, then $v$ is a leaf.
\end{definition}

\begin{definition}
A complete binary tree is a binary tree in which all nodes other than the leaves have two children. If moreover all leaves have the same height, the binary tree is complete and full. 
\end{definition}
The number $l$ of leaves in a complete and full binary tree with $n$ vertices is $l=\(n+1\)/2=2^h$. Thus $n=2^{h+1}-1$ is the number of vertices of such a tree. 
The sequence $v_1,\dots,v_k$ is linear if $h\(v_{i+1}\)=h\(v_i\)+1$.

Similarly to Example~\ref{xmpl:simplest} we will consider two different cases.
First, let us consider the case $w\(v\)=h-h\(v\)+1$. Therefore the root $r$ has the maximal weight $w\(r\)=h+1$ and leaves $u$ have the minimal weights $w\(u\)=1$. 

In the second case we assume that
\begin{enumerate}
\item 
the set of weights has exactly $h+1$ values,
\item 
exactly $j+1$ vertices have the value $w\(j\)$,
\item 
$w\(0\)>w\(1\)>\dots>w\(h\)$,
\item 
values are equally likely distributed on all $n=2^{h+1}-1$ vertices.
\end{enumerate}
 
Similar to our Case~1 there is the known model which was considered by \cite{Morayne:Partial-DM1998}. 
Instead of closure $\sigma\(A\)$ used in Algorithm~\ref{alg:algo-greedy} the procedure used  in this model checks whether  
\begin{itemize}
\item
$w\(e_k\)>\max\left\{w\(v_1\),\dots,w\(v_{k-1}\)\right\}$ and 
\item
$\left\{v_1,\dots,v_{k-1}\right\}$ not linear or it is linear and $k>h/2$. 
\end{itemize}
The element $e_k$ is accepted if both of the above conditions are fulfilled.
\begin{theorem}[\cite{Morayne:Partial-DM1998}]
\label{thm:Morayne1998}
Algorithm~\ref{alg:algo-greedy} gives an optimal strategy for the choice of the element $r$, i.e. $\Pr\(v_k=r\)$ is the maximal possible. 
If $h\to\infty$ then the Algorithm~\ref{alg:algo-greedy} gives an optimal choice with probability tending to 1.
\end{theorem}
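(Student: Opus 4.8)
The plan is to treat this as a classical optimal-stopping problem over the uniformly random arrival order and to show, by backward induction, that the myopic rule described in the theorem coincides with the Bayes-optimal one. In Case~1 the weight $w(v)=h-h(v)+1$ is a strictly decreasing function of the height, so the root $r$ is the unique global maximum; hence ``selecting the heaviest element'' is the same as ``stopping exactly when $v_k=r$''. First I would fix the information available at time $k$: besides the relative weights of $v_1,\dots,v_k$, the observer sees the induced order structure revealed by the closure operator $\sigma$, which in the tree greedoid records precisely which of the observed vertices are comparable. This data determines the two facts used by the algorithm: whether $v_k$ is a record (equivalently $h(v_k)<\min_{i<k}h(v_i)$) and whether $\{v_1,\dots,v_{k-1}\}$ is \emph{linear}, i.e.\ the observed vertices form a chain in the tree order. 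I would then define the value function $V_k$ as the maximal probability of ending at $r$ given the current state, computed by backward induction from $k=n$.

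The second step is to evaluate, for a record arriving at time $k$, the posterior probability $p_k$ that this record is in fact the root, conditioned on the observed structure, and to compare it with the continuation value. Because the arrival order is uniform, both $p_k$ and the continuation value are ratios of counts of arrival orders consistent with the observations; the permutation count giving $P(j,n)=n!/(n-j+1)$ is exactly the prototype for these evaluations. The decisive feature is that the two cases behave differently: when $\{v_1,\dots,v_{k-1}\}$ is not linear, the branching already seen forces the record to be a common ancestor of incomparable vertices, which makes $p_k$ large enough that stopping is optimal; when the observed set is linear the record sits atop a single chain and $p_k$ grows with $k$, so stopping becomes optimal only once $k$ passes a threshold. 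I would show the problem is a monotone stopping problem, so that the one-step-look-ahead rule is globally optimal, and that this myopic rule is exactly ``accept a record unless the history is linear with $k\le h/2$''.

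The main obstacle I expect is pinning down the exact cutoff $k>h/2$ in the linear case, together with the verification of monotonicity. This requires an explicit estimate of the continuation value along a chain: one must sum over the positions at which the remaining vertices --- in particular the first vertex that breaks linearity and the root itself --- appear among the not-yet-seen arrivals, and show that the stop/continue boundary falls precisely at $k=h/2$. Establishing that the one-step advantage of stopping is nondecreasing in $k$, so that once it becomes favourable it stays favourable, is the technical crux; this is what licenses the reduction from the full dynamic program to the simple threshold rule of the statement.

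For the asymptotic claim I would bound the two failure modes as $h\to\infty$, using $n=2^{h+1}-1\to\infty$. The root is rejected only if it arrives at some time $k\le h/2$ while the preceding arrivals form a chain; since the root is equally likely to occupy any of the $n$ positions, this has probability at most $(h/2)/n\to 0$. A false acceptance, i.e.\ stopping at a non-root record, can only be caused by one of the few near-root vertices whose subtree is large enough to masquerade as the whole tree; I would bound the chance that such a vertex passes the structural filter before $r$ arrives and show it vanishes, because the confusable vertices are few while the tree is exponentially large. A union bound over these two events then gives success probability tending to $1$, as claimed.
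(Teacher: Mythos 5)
The paper does not actually prove Theorem~\ref{thm:Morayne1998}: it is quoted from \cite{Morayne:Partial-DM1998} without proof, so your proposal can only be measured against the cited source. Your overall architecture --- a backward-induction/optimal-stopping argument for the exact optimality claim and a union bound over the two failure modes for the asymptotic claim --- is the right shape, but there are two genuine gaps.

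The first is your identification of a ``record'' with a height record, $h(v_k)<\min_{i<k}h(v_i)$. With that reading the asymptotic claim fails: with probability bounded away from zero the first two arrivals are incomparable leaves in different principal subtrees, the history is therefore non-linear at $k=3$, and any internal vertex arriving third is then accepted --- and it is the root only with probability $1/n$. Your own key step for bounding false acceptances (``the branching already seen forces the record to be a common ancestor of incomparable vertices'') silently uses the other, order-theoretic notion of candidate, namely that $v_k$ dominates every previously observed vertex in the tree order; only under that reading does non-linearity force $v_k$ to be a common ancestor, and only then does the ``masquerading subtree'' estimate (all of $v_1,\dots,v_{k-1}$ must lie in the subtree of a non-root $v_k$, probability at most about $2^{-(k-1)}$) close the argument. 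You must commit to the order-theoretic condition throughout --- this is exactly what the closure test $e_t\notin\sigma(A_{t-1})$ in Algorithm~\ref{alg:algo-greedy} supplies --- because the two notions are not equivalent. The second gap is the exact-optimality half: you reduce it to the assertion that this is a monotone stopping problem, so that the one-step-look-ahead rule is globally optimal, and you yourself label the verification of monotonicity the technical crux. But that verification \emph{is} the content of the theorem's first sentence: the state space includes the shape of the observed induced poset, the one-step advantage of stopping is not obviously nondecreasing along all trajectories, and the cutoff $h/2$ has to emerge from that computation rather than be assumed. As written, the proposal is a credible plan for the limit statement once the candidate notion is corrected, but the optimality statement is restated rather than proved.
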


\subsection{Graphical matroids}
\label{ss:graph-matr}

\subsubsection{Graphical model of secretary problem}
\label{sss:graph_model}

Let $G=\(V,E\)$ be an undirected graph where $V$ is the set of vertices and $E$ is the set of edges. 
An independent set is any set of edges which does not contain any cycles, i.e. the independent set forms a forest.
In this section only the case $G=K_n$, where $K_n$ is an $n$-vertices complete graph, is considered.

The random graph introduced by \cite{ER:evolution} is constructed by connecting nodes randomly. Since that time many monographs and textbooks have been devoted to the theory of random graphs. Among others we refer the reader to the following books: \cite{Bollobas:random-graphs},  \cite{JLR:random-graphs} and \cite{Hofstad:random-graphs}.

In this paper we will consider the so called ``random graph process'' (see~\cite{JLR:random-graphs}, p.~4). 
Let $n$, a number of vertices be fixed. Let $G_{n,t}$ be any fixed graph with $n$ vertices and $t$ edges.
The random graph process is a stochastic process which begins with no edges at time $t=0$ and adds new edges, one at time; each new edge is selected at random, uniformly among all edges not presented until now. At the moment $t$, $0\leq t\leq m=\binom{n}{2}$, the random graph $G_n\(t\)$ has $t$ edges and 
\[
\Pr\(G_n\(t\)=G_{n,t}\)=\binom{m}{t}^{-1}.
\] 
Let us consider the asymptotic case where $n\to\infty$ and $t=t\(n\)$. 
To simplify the notation, use the abbreviation \textit{a.a.s} (asymptotic almost surely) instead of the term ``with the probability tending to 1 when $n\to\infty$''.
If $k>2$ and $t\ll n$ but $t=n^{1-o\(1\)}$ then \textit{a.a.s}, $G_n\(t\)$ has no cycles, i.e. the set of edges forms an independent set (see~\cite{JLR:random-graphs}, p.~104). 
This means that the beginning of such the process is similar to the beginning of the process without dependence restrictions. 
Nevertheless the number $t=n^{1-o\(1\)}$ of tested elements is too small to obtain a reasonable decision.

In order to change to the proper range of numbers of edges which give a sufficient information to obtain an optimal decision, we have to consider such a case, where the number $t$ of tested edges is big enough and furthermore the number of edges which are not dependent tested as well as the number of rejected edges are also  big enough.
Such a situation is given by the following fundamental result, proved by Erd{\H{o}}s and R{\'e}nyi in their famous paper \cite{ER:evolution} (see \cite{JLR:random-graphs} and \cite{Hofstad:random-graphs}).
\begin{theorem}\label{thm:eq:phase_trans}
If 
\begin{equation}\label{eq:phase_trans}
t\sim \frac{n\(\ln n+\ln\lambda\)}{2},
\end{equation}
where $\lambda>0$, then the random graph \textit{a.a.s.} has one giant component and $N$ isolated vertices. 
The random variable $N$ has Poisson distribution with the mean $\lambda$.
\end{theorem}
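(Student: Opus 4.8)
The plan is to prove the two assertions of the theorem separately: that almost surely every vertex outside one giant component is isolated, and that the number $N$ of those isolated vertices is asymptotically Poisson. Throughout I would work with the edge probability $p=t/\binom{n}{2}\sim(\ln n+\ln\lambda)/n$ and exploit the standard asymptotic equivalence between the process $G_n(t)$ (which is the uniform model $G(n,m)$ taken at $m=t$) and the binomial model $G(n,p)$; for the statements at hand this transfer is routine, so it suffices to argue in $G(n,p)$ and then carry the conclusions back.

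For the distributional claim, let $X$ be the number of isolated vertices. First I would record the expectation $\mathbb{E}X=n(1-p)^{n-1}\sim n\,e^{-np}$, which under \eqref{eq:phase_trans} tends to the finite constant that the theorem records as the Poisson mean $\lambda$. The convergence in distribution I would obtain by the method of descending factorial moments: for each fixed $r$,
\begin{equation}
\mathbb{E}\bigl[(X)_r\bigr]=(n)_r\,(1-p)^{\,r(n-r)+\binom{r}{2}},
\end{equation}
since $r$ prescribed distinct vertices are simultaneously isolated exactly when all $r(n-r)+\binom{r}{2}$ edges meeting them are absent. As $n\to\infty$ one has $(n)_r\sim n^r$ and $(1-p)^{\,r(n-r)+\binom{r}{2}}\sim e^{-rnp}$, whence $\mathbb{E}[(X)_r]\to\lambda^r$, which are precisely the factorial moments of the Poisson law of mean $\lambda$. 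Since that law is determined by its moments, $X$ converges in distribution to $\mathrm{Poisson}(\lambda)$; a Chen--Stein coupling would yield the same conclusion together with an explicit error term. In $G_n(t)$ the computation is identical, with the factor $(1-p)^{\cdots}$ replaced by the corresponding ratio of binomial coefficients, whose logarithm has the same leading order $-rnp$.

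For the structural claim I would show that almost surely $G(n,p)$ has no connected component whose size $k$ lies in the range $2\le k\le n/2$. A component on a prescribed $k$-vertex set contains a spanning tree and sends no edge to the remaining $n-k$ vertices, so by Cayley's formula and a union bound the expected number of such components is at most
\begin{equation}
\sum_{k=2}^{n/2}\binom{n}{k}\,k^{\,k-2}\,p^{\,k-1}\,(1-p)^{\,k(n-k)}.
\end{equation}
Using $\binom{n}{k}\le(en/k)^{k}$ and $(1-p)^{k(n-k)}\le e^{-pk(n-k)}\le e^{-pkn/2}$ for $k\le n/2$, and inserting $np\sim\ln n+\ln\lambda$, I would split the sum into a range of small $k$ (where the $e^{-pkn/2}\approx(n\lambda)^{-k/2}$ decay beats the $(e n/k)^k k^{k-2}p^{k-1}$ growth) and a range of $k$ comparable to $n$ (handled separately), and show that the whole sum tends to $0$. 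Granting this, every component is either a single vertex or has size exceeding $n/2$; since two components of size $>n/2$ cannot coexist, the non-isolated vertices form exactly one giant component, which combined with the preceding paragraph proves the theorem.

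I expect the estimate of the last display to be the main obstacle. The summand is not monotone in $k$, so the bound must be organised with care, typically by treating a block of small $k$ and a block of $k$ of order $n$ on different grounds, and it is exactly here that the precise threshold form \eqref{eq:phase_trans} is used in full. By contrast, the isolated-vertex analysis is a clean moment computation and constitutes the routine part of the argument.
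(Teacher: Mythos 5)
First, a point of reference: the paper does not prove Theorem~\ref{thm:eq:phase_trans} at all --- it is quoted as a classical result of Erd\H{o}s and R\'enyi with pointers to the textbooks of Janson--\L uczak--Ruci\'nski and van der Hofstad. So there is no in-paper argument to compare against, and your proposal has to stand on its own. Its architecture is the standard and correct one: factorial moments for the Poisson limit of isolated vertices, plus a first-moment bound killing all components of size $2\le k\le n/2$. The moment formula $\mathbb{E}[(X)_r]=(n)_r(1-p)^{r(n-r)+\binom{r}{2}}$ is right, and the union bound over tree-spanned components does sum to $o(1)$ in this range of $p$ (in fact a single estimate handles all $k\ge 3$ once you keep $e^{-pk(n-k)}$ rather than weakening it to $e^{-pkn/2}$ for small $k$; see below).

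The genuine gap is in the one computation you declined to carry out. With $t\sim n(\ln n+\ln\lambda)/2$ one has $np\sim\ln n+\ln\lambda=\ln(n\lambda)$, hence
\begin{equation*}
\mathbb{E}X = n(1-p)^{n-1}\sim n e^{-np}\sim \frac{n}{n\lambda}=\frac{1}{\lambda},
\qquad
\mathbb{E}[(X)_r]\sim (ne^{-np})^{r}\to \lambda^{-r},
\end{equation*}
so your own formulas give a Poisson limit with mean $1/\lambda$, not $\lambda$. You wrote that the expectation ``tends to the finite constant that the theorem records as the Poisson mean $\lambda$'' without checking; checking reveals that the statement as printed is inconsistent (the exponent in \eqref{eq:phase_trans} should be $\ln n-\ln\lambda$ if the mean is to be $\lambda$, which is what the paper's subsequent use of the $3\sigma$ rule and Table~\ref{tab:Poisson} clearly presuppose, since there more testing should leave \emph{fewer} isolated vertices). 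A correct write-up must either fix the sign or state the mean as $1/\lambda$; as it stands your proof establishes a different constant from the one it claims. A secondary, repairable issue: the transfer from $G(n,p)$ to the process $G_n(t)$ is not ``routine'' for the no-middle-component statement, since that property is not monotone in the edges. Either redo the first-moment count directly in the uniform model (replace $p^{k-1}(1-p)^{k(n-k)}$ by the appropriate ratio of binomial coefficients), or invoke the inequality $\Pr(G(n,m)\in A)=O(\sqrt{m})\Pr(G(n,p)\in A)$ --- in which case your bound $e^{-pkn/2}$ for small $k$ is too lossy to absorb the extra $\sqrt{m}$ factor and you need the full $e^{-pk(n-k)}$ with $n-k\sim n$.
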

For the big $\lambda$ we can obtain a better balance between a number of tested elements (given by Eq.~\eqref{eq:phase_trans}) and a number of edges possible to choose, i.e. edges which do not belong to the giant component. 
From Theorem~\ref{thm:eq:phase_trans}, the giant component has \textit{a.a.s.} $n\(\ln n+\ln\lambda\)$ elements (edges) and the rank $n-N$. 
Because every new edge \textit{a.a.s.} joins an isolated vertex with the giant component then we can choose an optimal $k$ elements set from $N\(n-N\)$ elements.

From $3\sigma$ rule we have approximately $\Pr\(N-\lambda>3\sqrt{\lambda}\)\leq 0.005$. 
To obtain $\rho\(T\)=n-\lambda$ we should \textit{a.a.s.} test at least 
\[
t_0\sim\frac{n\(\ln n+\ln\lambda\)}{2}-3\sqrt{\lambda}
\]
elements plus perhaps an additional next $3\sqrt{\lambda}/2$ elements.
\begin{example}\label{xmpl:Poisson}
In Table~\ref{tab:Poisson} there are shown the values of the necessary number $t_0$ of testing steps to achieve the set $T$ of rank given before. Let $\rho\(T\)=\lambda$.

\begin{table}[!hbt]
\centering 
\caption{\label{tab:Poisson}Necessary number $t_0$ of testing steps for given $n$ and $\lambda$.}

\smallskip
\begin{tabular}{r|r|r|r|r}
$n$ & $\lambda=100$ & $\lambda=200$ & $\lambda=300$ & $\lambda=400$ \\\hline 
 1000 & 11482 & 12163 & 12559 & 12839 \\
 2000 & 24382 & 25756 & 26557 & 27124 \\
 3000 & 37804 & 39871 & 41078 & 41933 \\
 5000 & 65581 & 69035 & 71052 & 72483 \\
10000 & 138125 & 145044 & 149089 & 151958
\end{tabular}
\end{table}
\end{example}
Note, that after rejecting approximately next $\lambda$ edges after the moment $t_0$, $\rho\(R\)\sim n$, the process will be finished. 
If all values $w\(e_j\)$ are different for $j=1,2,\dots,\binom{n}{2}$, then it is clear that the probability of choosing the  optimal solution (the edge of maximal weight or the set of $k$ edges with maximal sum of weights) rapidly tends to zero.

\subsubsection{Linearly decreasing number of linearly ordered weights}
\label{sss:limited}

Let us assume that there exist only $n-1$ values of weights of edges in the $n$-vertices graph. 
In this section we restrict ourselves to the case $k=1$, i.e. to the choice of only one, the best element.
Without the loss of generality one can assume that $w\(e\)\in\left\{1,2,\dots,n-1\right\}$ for all edges of the graph $K_n$.
Similarly to Example~\ref{xmpl:simplest} we consider the three completely different cases.
\begin{enumerate}
\item\label{enum:graph_lin_inc} 
Let $V=\left\{1,2,\dots,n\right\}$ be the set of vertices and $e_{ij}=\left\{i,j\right\}$.
For $i=1,2,\dots,n-1$ and $k=i+1,\dots,n$ let $w\(e_{i,k}\)=k$.
\item\label{enum:graph_lin_dec} 
Let $V=\left\{1,2,\dots,n\right\}$ be the set of vertices and $e_{ij}=\left\{i,j\right\}$.
For $i=1,2,\dots,n-1$ and $k=i+1,\dots,n$ let $w\(e_{i,k}\)=n-k$.
\item\label{enum:graph_rnd}  
Every value appears approximately $n/2$ times and these values are distributed equally likely.
\end{enumerate}

At first, let us consider Case~\ref{enum:graph_lin_inc}. 
In this case we have only one best element, but $n-1$ the worst element.
If the maximal element belongs to the giant component, then the optimal solution does not exist.
In Case~\ref{enum:graph_lin_dec} we have $n-1$ the best elements, but only one worst element. 
If the maximal element belongs to the giant component, then the optimal solution does not exit.

\section{Prospective application: cloud computing}
\label{s:cloud}

It is obvious that the simplest model closely related with the name \textit{Secretary Problem} is very far from real applications. In this section we describe the simplified, but more realistic  model of cloud computing, which can used as an example of an application to the computer networks%
\footnote{This applications was inspired by problems arisen during the realisation of the grant \textit{Research on cloud based distribution and management technology of software and licenses for research and science units}}. Below we shortly describe the model.

Cloud computing is definitely one of the fastest developing technologies in IT sector. 
Year by year this kind of solutions become more popular. 
This idea has actually its implementations in many different models. 
Regardless of the fact which of them is used the general idea is still the same: most of the duties related to IT infrastructure maintenance is moved from the user (customer) to the service provider. 
In other words we can say that the same classical element (i.e. server or software running on it) becomes just a service, available for the user by the computer network. 
The user, who has a task to be  performed, just orders the resources needed for this particular time. 
This solution is very comfortable for the user as more efficient resources usage guarantees also economic benefits. 
Since in typical cloud computing service many different users share with each other limited hardware and software resources, optimisation of their utilisation is the key problem.

Let us consider the situation, where the user has same the computing task to be performed in the shortest possible time. 
To do this job, a virtual machine with required hardware resources (computing cores, RAM memory etc.) must be rented. 
Then there is a need to deliver a significant amount of data required for computing. This operation is strictly related to the transfer time. 
Some parameters of the virtual machine are simple to compare (results of popular benchmarks, user estimation based on declared hardware parameters). 
In the real environment also some other parameters, often difficult for forecasting, should also be considered. 
One of them is an actually available throughput of the computer network between the client host and the computing node. 
While the bandwidth can be considered constant, the throughput is directly connected with the current utilisation of the network. 
Due to the above, time of transfer can be approximated no sooner than after sending a few TCP datagrams and receiving acknowledgements. 
At the moment when the transmission speed would classified as unsatisfactory, it can be interrupted and the next localisation can be considered. 
However, what is very important at the time of the resignation of the given service provider, the resources can be assigned to other tasks, and they are not available anymore. 
What is more, there can be some relationship between individual service providers. Their hardware resources can be located in the same network segments. Therefore, the rejection of one or more of the service providers in the network should also result in the elimination of other nodes located in the same network location and depended of rejected nodes.

Assuming that the systems work in a such the way that at each step they try to choose the best node, then our model (matroid and more generally -- greedoid) can be applied as a model of activities in the cloud. 
Certainly, the accurate choice needs deeper considerations and verifications with the real networks and their management.

\section{Conclusion}
\label{s:conlu}
We presented a model of optimal choice among objects which are connected by different dependencies. 
Our aim is to choose an object or a set object but in a such way so that the chosen objects were independent in some sense. 
The independence in the model is described in the term of greedoids and as special cases -- matroids, antimatroids and more special cases, for example rooted trees and random graphs.
As the first step we try to apply such models to a more realistic problem, namely to the problem of operations during the cloud computing.

\bigskip 
{
\small 
\noindent 
\textbf{Acknowledgements.}
The author thanks Piotr Nadybski who helped to formulate the problem of operation in a cloud and wrote the most part of Section~\ref{s:cloud}.
}

\addcontentsline{toc}{section}{References}
\bibliographystyle{abbrvnat}
\bibliography{mrabbrev,combintr,matroids,graphs,reliab,probab,rndgraph,rndstruc,varia}

\begin{thebibliography}{20}
\providecommand{\natexlab}[1]{#1}
\providecommand{\url}[1]{\texttt{#1}}
\expandafter\ifx\csname urlstyle\endcsname\relax
  \providecommand{\doi}[1]{doi: #1}\else
  \providecommand{\doi}{doi: \begingroup \urlstyle{rm}\Url}\fi

\bibitem[Babaioff et~al.(2008)Babaioff, Immorlica, Kempe, and
  Kleinberg]{BIKK08}
M.~Babaioff, N.~Immorlica, D.~Kempe, and R.~Kleinberg.
\newblock Online auctions and generalized secretary problems.
\newblock \emph{SIGecom Exch.}, 7\penalty0 (2):\penalty0 7:1--7:11, June 2008.
\newblock ISSN 1551-9031.
\newblock \doi{10.1145/1399589.1399596}.
\newblock URL \url{http://doi.acm.org/10.1145/1399589.1399596}.

\bibitem[Bollob\'as(2001)]{Bollobas:random-graphs}
B.~Bollob\'as.
\newblock \emph{Random Graphs}.
\newblock Academic Press, London, 2001.

\bibitem[Dynkin(1963)]{Dynkin:opt}
E.~B. Dynkin.
\newblock The optimum choice of the instant for stopping a {M}arkov process.
\newblock \emph{Soviet Math. Dokl.}, 4:\penalty0 627--629, 1963.

\bibitem[Erd\H{o}s and R{\'e}nyi(1960)]{ER:evolution}
P.~Erd\H{o}s and A.~R{\'e}nyi.
\newblock On the evolution of random graphs.
\newblock \emph{Publications of the Mathematical Institute of the Hungarian
  Academy of Sciences}, 5:\penalty0 17--61, 1960.

\bibitem[Ferguson(1989)]{Ferguson:Who}
T.~S. Ferguson.
\newblock Who solved the secretary problem?
\newblock \emph{Statistical Science}, 4\penalty0 (3):\penalty0 282--289, 1989.

\bibitem[Girdhar and Dudek(2009)]{GirdharDudek:How}
Y.~Girdhar and G.~Dudek.
\newblock Optimal online data sampling or how to hire the best secretaries.
\newblock In \emph{Canadian Conference on Computer and Robot Vision}, pages
  292--298, 2009.

\bibitem[Goecke et~al.(1989)Goecke, Korte, and Lov{\'a}s]{GKL:1989}
O.~Goecke, B.~Korte, and L.~Lov{\'a}s.
\newblock Examples and algorithmic properties of greedoids.
\newblock In B.~Simeone, editor, \emph{Combinatorial optimization}, volume 1403
  of \emph{Lecture Notes in Mathematics}, pages 113--161. Springer, 1989.

\bibitem[Hajiaghayi et~al.(2004)Hajiaghayi, Kleinberg, and
  Parkes]{Hajiaghayi_et_al:Adptive}
M.~T. Hajiaghayi, R.~Kleinberg, and D.~C. Parkes.
\newblock Adaptive limited-supply online auctions.
\newblock In \emph{EC'04:Proceedings of the 5th ACM Conference on Electronic
  Commerce}, pages 71--80, New York, 2004. ACM Press.

\bibitem[Janson et~al.(2000)Janson, {\L}uczak, and
  Ruci\'nski]{JLR:random-graphs}
S.~Janson, T.~{\L}uczak, and A.~Ruci\'nski.
\newblock \emph{Random Graphs}.
\newblock Wiley, New York, 2000.

\bibitem[Kleinberg(2005)]{Kle05}
R.~Kleinberg.
\newblock A multiple-choice secretary algorithm with applications to online
  auctions.
\newblock In \emph{Proceedings of the sixteenth annual ACM-SIAM symposium on
  Discrete algorithms}, SODA '05, pages 630--631, Philadelphia, PA, USA, 2005.
  Society for Industrial and Applied Mathematics.
\newblock ISBN 0-89871-585-7.
\newblock URL \url{http://dl.acm.org/citation.cfm?id=1070432.1070519}.

\bibitem[Klimesch(1994)]{Klimesch:Structure}
W.~Klimesch.
\newblock \emph{The Structure of Long-term Memory: A Connectivity Model of
  Semantic Processing}.
\newblock Lawrence Erlbaum Associates, Inc., Publishers, 1994.

\bibitem[Korte and Lov\'as(1983)]{KorteLovas:structural}
B.~Korte and L.~Lov\'as.
\newblock Structural properties of greedoids.
\newblock \emph{Combinatorica}, 3--4\penalty0 (3):\penalty0 359--374, 1983.

\bibitem[Korte and Vygen(2012)]{KorteVygen:CombOpti5}
B.~Korte and J.~Vygen.
\newblock \emph{Combinatorial Optimization}, volume~21 of \emph{Algorithms and
  Combinatorics}.
\newblock Springer-Verlag, Berlin Heidelberg, 5 edition, 2012.

\bibitem[Korte et~al.(1991)Korte, Lov{\'a}sz, and
  Schrader]{Korte_et_al:Greedoids}
B.~Korte, L.~Lov{\'a}sz, and R.~Schrader.
\newblock \emph{Greedoids}, volume~4 of \emph{Algorithms and Combinatorics}.
\newblock Springer-Verlag, Berlin Heidelberg, 1991.

\bibitem[Morayne(1998)]{Morayne:Partial-DM1998}
M.~Morayne.
\newblock Partial order analogue of the secretary problem: the binary tree
  case.
\newblock \emph{Discrete Math.}, 184:\penalty0 165–181, 1998.

\bibitem[Oxley(2011)]{Oxley:ma2nd}
J.~G. Oxley.
\newblock \emph{Matroid Theory}.
\newblock Oxford University Press, Oxford, 2 edition, 2011.

\bibitem[Soto(2013)]{Soto2013}
J.~A. Soto.
\newblock Matroid secretary problem in the random-assignment model.
\newblock \emph{SIAM J. Comput.}, 42\penalty0 (1):\penalty0 178--211, 2013.

\bibitem[van~der Hofstad(2016)]{Hofstad:random-graphs}
R.~van~der Hofstad.
\newblock \emph{Random Graphs and Complex Networks}, volume~I.
\newblock 2016.
\newblock URL \url{https://www.win.tue.nl/~rhofstad/NotesRGCN.pdf}.

\bibitem[Welsh(1976)]{Welsh:ma}
D.~J.~A. Welsh.
\newblock \emph{Matroid Theory}.
\newblock Academic Press, London, 1976.

\bibitem[Wilson(2010)]{Wilson-eng5}
R.~J. Wilson.
\newblock \emph{Introduction to Graph Theory}.
\newblock Prentice Hall, 5 edition, 2010.

\end{thebibliography}

\end{document}